\documentclass[envcountsame,runningheads]{llncs}

\usepackage[T1]{fontenc}
\usepackage[utf8]{inputenc}
\usepackage{lmodern}
\usepackage{microtype}
\input{glyphtounicode}
\usepackage{amsmath}
\usepackage{amsfonts}
\usepackage{amssymb}

\usepackage{mathtools}
\mathtoolsset{mathic=true}

\usepackage[ruled,vlined,linesnumbered]{algorithm2e}
\SetKw{KwDownTo}{downto}
\SetKwComment{Comment}{$\triangleright$~}{}
\SetCommentSty{itshape}
\DontPrintSemicolon

\usepackage{lineno}
\modulolinenumbers[5]

\usepackage{graphicx}
\usepackage{tabto}
\usepackage{xspace}
\usepackage[caption=false]{subfig}

\usepackage[inline,shortlabels]{enumitem}
\setlist[enumerate]{leftmargin=*}

\DeclareRobustCommand{\qedClaim}
{%
  \ifmmode \lozenge%
  \else%
    \leavevmode\unskip\penalty9999 \hbox{}\nobreak\hfill%
    \quad\hbox{$\lozenge$}%
  \fi%
}%

\DeclareRobustCommand{\qed}
{%
  \ifmmode \square%
  \else%
    \leavevmode\unskip\penalty9999 \hbox{}\nobreak\hfill%
    \quad\hbox{$\square$}%
  \fi%
}%

\spnewtheorem*{conj}{Conjecture}{\normalfont\bfseries}{\itshape}
\spnewtheorem{obs}{Observation}{\rmfamily}{\itshape}
\spnewtheorem{subclaim}{Claim}{\rmfamily}{\itshape}

\newcommand{\calB}{\mathcal{B}}

\newcommand{\calE}{\mathcal{E}}
\newcommand{\calF}{\mathcal{F}}

\newcommand{\calO}{\mathcal{O}}

\newcommand{\calS}{\mathcal{S}}

\makeatletter
\newcommand{\ie}{i.\,e.\@ifnextchar{,}{}{~}}
\newcommand{\eg}{e.\,g.\@ifnextchar{,}{}{~}}
\newcommand{\vtx}[1]{\textsl{#1}\@ifnextchar{)}{\hspace{0.08333em}}{}}
\makeatother

\title{%
    Canonical Join Trees%
}

\author{%
    Arne Leitert
}

\institute{}

\DeclareMathOperator{\dep}{dep}
\DeclareMathOperator{\dis}{d}

\DeclareMathOperator{\edge}{\text{---}}
\DeclareMathOperator{\path}{\tikz[baseline=-0.625ex]{\draw [line width =0.25pt,decorate, decoration={zigzag, segment length=0.5em, amplitude=0.2ex}] (0,0) -- (1.em,0);}}

\usepackage{tikz}
\usetikzlibrary{decorations.pathmorphing}

\begin{document}

\maketitle

% \linenumbers

\begin{abstract}
A rooted join tree of an acyclic hypergraph is \emph{canonical} if each of its nodes has minimum possible depth among all join trees with the same root.
Luo~et\,al.\  introduce these trees in~\cite{LuoVavVanWan2026} and pose the open problem of characterizing acyclic hypergraphs according to whether they admit canonical join trees for none, some, or all hyperedges as root.
In this paper, we resolve this question.
We show that each canonical join tree is unique with respect to its root and give a first characterisation for such trees.
Additionally, we characterise hypergraphs that admit a canonical join tree for none, some, or all their hyperedges as root.
Lastly, we present a linear-time algorithm that constructs a canonical join tree whenever one exists.
\end{abstract}

\section{Introduction}

Acyclic hypergraphs are a fundamental structure in database theory and graph theory.
Their key characterisation is the ability to arrange their hyperedges into a tree such that, if two hyperedges share a vertex~$\vtx{x}$, then every hyperedge on the path between them also contains~$\vtx{x}$.
Such a tree is called \emph{join tree}.
They provide a compact structural representation of acyclic hypergraphs and form the basis of many algorithms on them.
Recently, join trees have also attracted renewed attention in the database community as tool for optimizing join queries (see for example \cite{KouVanWanWan2026,LuoVavVanWan2026,WangYi2026} and literature cited in them).

Out of all join trees an acyclic hypergraph has, trees with small height are of special interest, because such trees allow better parallelization and index utilization.
In~\cite{BlairPeyton1994}, Blair and Peyton present an algorithm which computes a join tree with minimum diameter (and thus minimum height) in $\calO(N)$ time where $N$ is the total input size of a given hypergraph.
Note that their algorithm assumes the maximal cliques of a chordal graph as input.
One can therefore use it for any acyclic hypergraph without the overhead of creating a corresponding chordal graph.
With the same motivation, Luo~et\,al.\,\cite{LuoVavVanWan2026} introduce the stricter notion of a \emph{canonical} join tree.
Such a tree~$T$ simultaneously minimises the depth of every individual node~$u$ (denoted as~$\dep_T(u)$) instead of just the overall height:

\begin{definition}\label{def:canJoinTree}
A join tree~\( T \) with root~\( r \) is \emph{canonical} if \( \dep_{T}(u) \leq \dep_{T'}(u) \) for any other join tree~\( T' \) rooted in~\( r \) and all nodes~\( u \).
\end{definition}

Luo~et\,al.\ demonstrate that Berge-acyclic hypergraphs possess a unique canonical join tree for each of their hyperedges as root, and that one can find this tree in linear time.
They also raise the question of precisely characterising hypergraphs which admit unique canonical join trees for none, some, or all their hyperedges as root, respectively.
We answer this question in this paper.
Our specific contributions are as follows:
\begin{itemize}
    \item We show that, every canonical join tree is unique if it exists.
    \item We characterize canonical join trees in terms of their hypergraph's union join graph.
    \item We provide a subgraph characterization for acyclic hypergraphs that do not admit any canonical join trees.
    \item We present a linear-time algorithm which finds a canonical join tree for a given hypergraph and root, provided one exists.
    \item We characterise hypergraphs that admit a canonical join tree for every hyperedge as root.
\end{itemize}

\section{Preliminaries}

Let $H = (V, \calE)$ be a hypergraph.
We use $N = \sum_{E \in \calE} |E|$ to denote the total size of all hyperedges of~$H$.
Whenever a hypergraph is given, the input size is in~$\Theta(N)$.

This paper primarily investigates graphs and trees constructed from the hyperedges of some hypergraph.
We therefore use the term \emph{vertex} exclusively for elements with in a hyperedge, and we use the term \emph{node} for hyperedges in context of such a graph.
That is, a \emph{node} and \emph{hyperedge} are interchangeable terms in this paper.
We use slanted lowercase letters (\eg $\vtx{x}$, $\vtx{y}$) to represent vertices and italic lowercase letters (\eg $u$, $v$) to identify nodes.
Keep in mind, however, that nodes are sets of vertices.
Thus, $u = v$ indicates that both nodes contain the same vertices.
To indicate that both variables reference the same individual node, we write $u \equiv v$ instead.
Similarly, we also treat an edge~$uv$ as the intersection of both nodes, \ie, $uv = u \cap v$.
We also treat a graphs and trees as sets of nodes and edges.
For example, we write $uv \in G$ to indicate that $uv$ is an edge of~$G$, or write $x \notin T$ to indicate that $x$ is not a node of~$T$.

In a graph~$G$, the \emph{distance~\( \dis_G(u, v) \)} of two nodes $u$ and~$v$ is the length of a shortest path connecting $u$ and~$v$.
The \emph{depth~\( \dep_T(v) \)} of a node~$v$ in a rooted tree~$T$ is the distance from $v$ to~$T$'s root~$r$, \ie, $\dep_T(v) = \dis_T(v, r)$.
If clear from context, we may omit graph identifying indices of notations.

For a tree~$T$, $T[u, v]$ denotes the path from~$u$ to~$v$ in~$T$.
As with graphs and trees, we treat $T[u, v]$ as a set of nodes and edges.
We write $u \edge v$ to denote that $u$ is adjacent to~$v$ and $u \path v$ to denote that there is a path of from $u$ to~$v$.
The latter does not specify the length of that path and includes the case that $u \equiv v$.
If $T$ is a rooted spanning tree of some graph~$G$, we say an edge~$uv \in G$ is a \emph{tree edge} if $uv \in T$, $uv$ is a \emph{back edge} if $uv \notin T$ and $u$ is an ancestor of~$v$ in~$T$ or vice versa, and $uv$ is a \emph{cross edge} if it is neither a tree nor a back edge.

A tree~$T$ is called a \emph{join tree} for~$H$ if the hyperedges of~$H$ are the nodes of~$T$ and, for each vertex~$\vtx{x}$, the hyperedges containing~$x$ induce a subtree of~$T$.
That is, for any pair of nodes~$u, v \in T$, if $\vtx{x} \in u \cap v$, then $\vtx{x}$ is contained in each node in~$T[u, v]$.
A hypergraph is called \emph{acyclic} if it admits a join tree.
It is well known that one can determine if a given hypergraph is acyclic and, in that case, constructs a corresponding join tree for it in linear time~\cite{TarjanYannak1984}.
See~\cite{BrandsDragan2015} for a summary of known properties of acyclic hypergraphs as well as their relations to various graph classes.

The \emph{line graph~\( L \)} of~$H$ is the intersection graph of its hyperedges.
That is, $L$ has $H$'s hyperedges as nodes and $uv \in L$ if $u \cap v \neq \emptyset$.
It is well known that a tree~$T$ is a join tree for~$H$ if and only if it is the maximum spanning tree of $H$'s line graph where the weight of an edge is the number of vertices both hyperedges share~\cite{GaliHabiPaul1995,MayhewProber2024}.

Consider a rooted join tree~$T$ and a node~$u \in T$ with parent~$p$.
Then $q$ is the \emph{highest possible ancestor} of~$u$ if $q$ is the ancestor of~$u$ closest to $T$'s root such that $q \supseteq up$.
Accordingly, \emph{making all nodes adjacent to their highest possible ancestor} is the process of finding such $q$ for each node~$u$, removing the edge~$up$ and adding the edge~$uq$ instead.
This can be done in linear time for any given join tree~\cite{Leitert2021}, or one can directly compute such a join tree from a given acyclic hypergraph in linear time~\cite{LuoVavVanWan2026}.

\section{Properties of Join Trees and Their Unions}
\label{sec:joinTreeEdges}

A given acyclic hypergraph~$H$ may have up to exponentially many join trees.
The union of these trees is called a \emph{union join graph}.
That is, a graph~$G$ is a union join graph if its nodes are the hyperedges of~$H$ and two nodes $u$ and~$v$ are adjacent in~$G$ if $H$ admits a join tree~$T$ with an edge~$uv$.

These graphs are also known as \emph{reduced clique graphs} if $H$ represents the maximal cliques of a chordal graph~\cite{GaliHabiPaul1995,HabibStacho2012}, or \emph{atom graphs} if $H$ represents the atoms of some graph~\cite{KabPinLelBer2007}.
Under reasonable assumptions, it is not possible to compute the union join graph of a given acyclic hypergraph is $\calO\bigl( N^{2 - \varepsilon} \bigr)$ time for any constant~$\varepsilon > 0$, although faster algorithms are available for specific subclasses~\cite{Leitert2021}.

% Our goal in this section is to derive structural properties of union join graphs that later allow us to characterize canonical join trees.

In this paper, we use the relations between join trees and union join graphs as one of the main tools to characterise canonical join trees.
We therefore use this section to prove some useful properties.
In what follows, assume that we are given an acyclic hypergraph~$H$ and that $G$ is its union join graph.
If not specified or constructed otherwise, we denote with~$T$ an arbitrary join tree of~$H$.

Lemma~\ref{lem:unionJoinProperties} and Lemma~\ref{lem:bipartiteClique} below describe properties between edges and necessary adjacencies between nodes in $G$ and~$T$.

\begin{lemma}[\cite{GaliHabiPaul1995,HabibStacho2012}]
\label{lem:unionJoinProperties}
For any distinct nodes \( x, y \), the following are equivalent.
\begin{enumerate}[$(i)$]
    \item
        \label{item:unionJoinGrpahEdge}
        \( xy \in G \).
    \item
        \label{item:joinTreeEdge}
        \( H \) admits a join tree~\( T \) with \( xy \in T \).
    \item
        \label{item:intersectionOnPath}
        Each join tree~\( T \) of~\( H \) has an edge~\( uv \in T[x, y] \) such that \( uv = xy \).
\end{enumerate}
\end{lemma}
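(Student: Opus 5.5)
The plan is to prove the cycle of implications $(i) \Rightarrow (ii) \Rightarrow (iii) \Rightarrow (ii) \Rightarrow (i)$. The equivalence of $(i)$ and $(ii)$ is immediate from the definition of the union join graph~$G$: $xy \in G$ holds exactly when some join tree of~$H$ contains the edge~$xy$. All the work therefore lies in showing that $(ii)$ and $(iii)$ are equivalent. Both directions will use the exchange argument for maximum spanning trees of the weighted line graph, together with the characterisation of join trees as maximum-weight spanning trees, where an edge~$uv$ has weight~$|u \cap v|$.

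For $(ii) \Rightarrow (iii)$, let $T'$ be a join tree containing~$xy$, and let $T$ be an arbitrary join tree. By the join-tree property, every node on $T[x,y]$ contains $x \cap y$, so every edge $uv \in T[x,y]$ satisfies $uv \supseteq xy$. Suppose no edge on $T[x,y]$ has $uv = xy$; then each such edge has $|uv| > |xy|$. Removing $xy$ from~$T'$ splits it into two components, one containing~$x$ and one containing~$y$. Some edge $uv$ of $T[x,y]$ must cross between these components. The tree $T' - xy + uv$ is a spanning tree of the line graph whose weight is strictly larger than that of~$T'$. This contradicts the maximality of~$T'$, so some edge on $T[x,y]$ satisfies $uv = xy$.

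For $(iii) \Rightarrow (ii)$, fix any join tree~$T$ and pick an edge $uv \in T[x,y]$ with $uv = xy$. Note that $xy \neq \emptyset$ is not needed separately, since $uv$ is a tree edge and hence already nonempty. Let $T'' = T - uv + xy$. Removing $uv$ separates $x$ from $y$, because $uv$ lies on the path between them, so $T''$ is a spanning tree. Its weight equals that of~$T$, since $|xy| = |uv|$, so $T''$ is again a maximum spanning tree and hence a join tree containing~$xy$. As an alternative check that avoids the weight characterisation, I would verify the subtree property for each vertex~$\vtx{z}$ directly. If $\vtx{z} \notin uv$, the nodes containing~$\vtx{z}$ lie entirely on one side of the cut, and since $\vtx{z} \notin xy$, the new edge does not matter for~$\vtx{z}$. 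If $\vtx{z} \in uv = xy$, then $\vtx{z} \in x$ and $\vtx{z} \in y$, and the two halves of its subtree are reconnected through~$xy$.

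The step I expect to be most delicate is the exchange in $(ii) \Rightarrow (iii)$. There I must ensure that the path $T[x,y]$ actually contains an edge crossing the cut of $T' - xy$, and that the strict inequality $|uv| > |xy|$ holds. The crossing edge exists because the path starts on $x$'s side and ends on $y$'s side. The strict inequality follows from $uv \supseteq xy$ combined with $uv \neq xy$. Once these two points are checked, the rest of the argument is routine.
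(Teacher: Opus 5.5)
Your proof is correct. The paper gives no proof of this lemma; it only cites Galinier et al.\ and Habib and Stacho. So you are supplying a self-contained justification rather than a different route through an argument in the text.

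Two small remarks follow.

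First, in $(ii) \Rightarrow (iii)$ you need neither the maximum-spanning-tree characterisation nor a contradiction. Let $uv \in T[x,y]$ be an edge crossing the cut of $T' - xy$. Then $u$ and $v$ lie on opposite sides of $xy$ in~$T'$, so $T'[u,v]$ passes through $x$ and~$y$. The join property of~$T'$ then gives $u \cap v \subseteq x \cap y$. Together with $x \cap y \subseteq u \cap v$, which comes from~$T$, this yields $uv = xy$ directly. It also tells you which edge of $T[x,y]$ works: any edge crossing that cut.

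Second, your direct subtree check in $(iii) \Rightarrow (ii)$ is the more robust of your two arguments. It is the same edge swap the paper uses in Lemma~\ref{lem:bipartiteClique} and Theorem~\ref{theo:canonIffBackEdge}. The line-graph formulation, and your remark that a tree edge is automatically nonempty, silently assume that $H$ is connected. For a disconnected $H$, join trees have empty edges and the line graph has no spanning tree. The direct check needs no such assumption.
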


\begin{lemma}\label{lem:bipartiteClique}
If \( xy \in G \), then \( T[x, y] = x \path u \edge v \path y \) with \( uv = xy \).
Additionally, \( pq \in G \) for all nodes~\( p \in T[x, u] \) and all nodes~\( q \in T[v, y] \).
\end{lemma}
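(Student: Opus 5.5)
We obtain the path decomposition directly from Lemma~\ref{lem:unionJoinProperties}. For the second claim we exhibit, for each pair $p,q$, a join tree containing the edge~$pq$; this is an edge swap in~$T$.

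\emph{Path decomposition.} Since $xy \in G$, Lemma~\ref{lem:unionJoinProperties}\,(\ref{item:intersectionOnPath}) gives an edge $uv \in T[x,y]$ with $uv = xy$. We name its endpoints so that $u$ is the one closer to~$x$ on $T[x,y]$. Then $T[x,y] = x \path u \edge v \path y$, as claimed.

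\emph{Intersections.} Fix $p \in T[x,u]$ and $q \in T[v,y]$. We first show $p \cap q = uv$.
\begin{itemize}
\item[$\supseteq$:] Every vertex $\vtx{z} \in uv = xy$ lies in both $x$ and~$y$. By the join tree property it lies in every node of $T[x,y]$, so in $p$ and in~$q$.
\item[$\subseteq$:] Since $p$ and $q$ lie on opposite sides of the edge~$uv$, the path $T[p,q]$ contains both $u$ and~$v$. Hence every vertex of $p \cap q$ lies in $u$ and in~$v$.
\end{itemize}
In particular $p \not\equiv q$, and $p \cap q = uv$ is nonempty.

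\emph{Edge swap.} Remove $uv$ from~$T$. This splits $T$ into a component $T_x$ containing $x,u,p$ and a component $T_y$ containing $v,y,q$. Add the edge~$pq$ and call the result~$T'$; it is a spanning tree on the hyperedges of~$H$. We check the join tree property for each vertex~$\vtx{z}$.
\begin{itemize}
\item If $\vtx{z} \notin uv$, the nodes containing~$\vtx{z}$ form a subtree of~$T$ that does not use the edge~$uv$. So they lie entirely in one component, whose edges are unchanged in~$T'$, and they stay connected.
\item If $\vtx{z} \in uv$, then $\vtx{z} \in p \cap q$. The nodes of $T_x$ containing~$\vtx{z}$ are connected: they form the intersection of two subtrees of~$T$, namely the subtree of nodes containing~$\vtx{z}$ and~$T_x$. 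Likewise the nodes of $T_y$ containing~$\vtx{z}$ are connected. The first set contains~$p$ and the second contains~$q$, so the edge $pq$ joins them into one subtree of~$T'$.
\end{itemize}
Hence $T'$ is a join tree of~$H$ with $pq \in T'$, and Lemma~\ref{lem:unionJoinProperties}\,(\ref{item:joinTreeEdge}) yields $pq \in G$.

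The only step needing care is this case check for vertices of~$uv$. Its key ingredient is the intersection identity $p \cap q = uv$, which guarantees that every vertex crossing the removed edge is carried by the new edge~$pq$.
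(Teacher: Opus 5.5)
Your proof is correct and follows the paper's argument exactly: you obtain $uv$ from Lemma~\ref{lem:unionJoinProperties}, prove $p \cap q = uv$ by the same two inclusions, and then replace $uv$ by $pq$ in~$T$. The only difference is that you verify vertex by vertex that the swapped tree is still a join tree, which the paper simply asserts.
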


\begin{proof}
It follows from Lemma~\ref{lem:unionJoinProperties} that $T[x, y]$ contains an edge~$uv = xy$.
For all nodes~$p \in T[x, u]$ and all nodes~$q \in T[v, y]$, we can observe that $xy \subseteq p \cap q$  since $p, q \in T[x, y]$, and that $p \cap q \subseteq uv$ since $uv \in T[p, q]$.
In other words, $xy = pq = uv$.
We can therefore create a new join tree from~$T$ by removing~$uv$ and replacing it with $pq$ instead.
\qed
\end{proof}

Consider a non-tree edge~$xy \in G$ with respect to some join tree~$T$.
It follows from Lemma~\ref{lem:bipartiteClique} that no matter how far apart $x$ and~$y$ are in~$T$, $G$ also contains a non-tree edge~$uw$ with a shared neighbour~$v$ in~$T$.
Corollary~\ref{cor:2LayerBackEdge} formalises that observation.

\begin{corollary}\label{cor:2LayerBackEdge}
If \( xy \in G \) is a non-tree edge with respect to some join tree~\( T \), then \( T[x, y] \) contains a path~\( u \edge v \edge w \) such that \( uw \in G \).
Additionally, either \( uw = uv \subseteq vw \) or \( uv \supseteq vw = uw \) (\ie, \( uw \) is equal to the ``smaller'' of \( uv \) and~\( vw \)).
\end{corollary}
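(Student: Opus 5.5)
We apply Lemma~\ref{lem:bipartiteClique} to the non-tree edge $xy \in G$. It gives the decomposition $T[x, y] = x \path u' \edge v' \path y$ with $u'v' = xy$. It also gives $pq \in G$ for every $p \in T[x, u']$ and every $q \in T[v', y]$. Since $xy \notin T$, the nodes $x$ and $y$ are not adjacent in~$T$, so the path $T[x, y]$ has length at least two. Hence at least one of the inequalities $x \not\equiv u'$ and $y \not\equiv v'$ holds. By symmetry we may assume $y \not\equiv v'$.

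Let $w$ be the neighbour of $v'$ on $T[v', y]$. We set $u := u'$ and $v := v'$, so that $u \edge v \edge w$ is a subpath of $T[x, y]$. Taking $p = u$ and $q = w$ in Lemma~\ref{lem:bipartiteClique} gives $uw \in G$. This proves the first claim. (If instead $x \not\equiv u'$, we take the neighbour of $u'$ on $T[x, u']$ and proceed symmetrically.)

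For the second claim, the proof of Lemma~\ref{lem:bipartiteClique} shows that $pq = uv$ for all such $p$ and~$q$; in particular $uw = uv$. It remains to show $uv \subseteq vw$. The node $u$ lies on $T[x, y]$, so $xy \subseteq u$, and both $v$ and $w$ lie on $T[x, y]$, so $xy \subseteq v \cap w$. Since $uv = xy$, this gives $uv \subseteq vw$. Together, $uw = uv \subseteq vw$. In the symmetric case we obtain $uv \supseteq vw = uw$ after relabelling.

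The only step that needs care is the case split on which side of the edge $u'v'$ the path has an extra node. The bipartite-clique conclusion applies across that edge, so the middle node $v$ must be an endpoint of the edge $u'v'$ from Lemma~\ref{lem:bipartiteClique}. With that choice, the equality $uw = uv$ follows directly from $pq = uv$, and no further argument is needed.
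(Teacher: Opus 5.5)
Your proof is correct and follows the paper's own reasoning: the paper states the corollary as an immediate consequence of Lemma~\ref{lem:bipartiteClique}, applied to a node adjacent to the separating edge $u'v'$ on whichever side of $T[x,y]$ has an extra node, exactly as you do. You take the equality $uw = uv$ from the lemma's proof rather than its statement, but it also follows directly from $xy \subseteq u \cap w \subseteq u \cap v = xy$, since $v$ lies on $T[u,w]$.
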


If an edge~$uw$ as defined in Corollary~\ref{cor:2LayerBackEdge} is a back edge, then we call it a \emph{2-layer back edge}.
Algorithm~\ref{algo:All2LayerBackEdge} allows us to find all such edges for a given join tree in linear time.
We define a \emph{top-down order} for a rooted tree as any order which visits parents before their children (\eg a pre-order or a BFS-order).

\SetKwFor{TDOrder}{foreach}{in top-down order}{}

\begin{algorithm}
\caption{%
    Finds all 2-layer back edges induced by a given join tree.
}
\label{algo:All2LayerBackEdge}

\KwIn{%
    An acyclic hypergraph~$H$ with a rooted join tree~$T$.
}

\KwOut{%
    All tripples $u,v,w$ which form a 2-layer back edge with respect to~$T$.
}

\medskip

Set $\phi(\vtx{x}) := \infty$ for each vertex~$\vtx{x}$ of~$H$.

\TDOrder{%
    node~\( w \in T \)%
    \label{line:iterNodes}
}{%
    \smallskip

    Set $\alpha(w) := 0$ and $\beta(w) := 0$.

    \ForEach{%
        vertex~\( \vtx{x} \in w \)%
        \label{line:iterVertices}
    }{%
        \lIf{%
            \( \phi(\vtx{x}) = \infty \)
        }{%
            Set $\phi(\vtx{x}) := \dep_T(w)$.
        }

        \lIf{%
            \( \phi(\vtx{x}) < \dep_T(w)\)
        }{%
            Set $\alpha(w) := \alpha(w) + 1$.
        }

        \lIf{%
            \( \phi(\vtx{x}) < \dep_T(w) - 1 \)
        }{%
            Set $\beta(w) := \beta(w) + 1$.
        }
    }

    \smallskip

    Let $v$ be the parent and $u$ be the grand parent of~$w$ (if they exist).

    \If{%
        \( \dep_T(w) \geq 2 \) and \( \beta(w) \in \bigl \{ \alpha(v), \alpha(w) \bigr \} \)
    }{%
        \KwSty{print} $u, v, w$
    }
}
\end{algorithm}

\begin{lemma}\label{lem:All2LayerBackEdge}
Algorithm~\ref{algo:All2LayerBackEdge} computes all 2-layer back edges induced by a given join tree in $\calO(N)$ time.
\end{lemma}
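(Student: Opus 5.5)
The plan is to prove correctness by showing that the counters $\alpha$ and $\beta$ measure the sizes of the relevant intersections. Then I reduce the 2-layer back edge condition to a comparison of these sizes, and finally bound the running time.

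First, characterise 2-layer back edges locally. Let $w$ be a node with parent $v$ and grandparent $u$. By the join tree property, $u \cap w \subseteq v$, so $uw \subseteq uv \cap vw$. I claim $uw \in G$ if and only if $uw = uv$ or $uw = vw$. The forward direction is exactly the second part of Corollary~\ref{cor:2LayerBackEdge}.

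For the converse, suppose $uw = vw$. Remove the tree edge $vw$ and add $uw$. Any vertex $\vtx{x}$ whose subtree used the edge $vw$ lies in $vw = uw$. It is therefore also in $u$, so the subtree of $\vtx{x}$ stays connected and we obtain a join tree containing $uw$. If instead $uw = uv$, the argument is symmetric: remove $uv$ and add $uw$. Since $uw \subseteq uv$ and $uw \subseteq vw$ always hold, each equality is equivalent to equality of cardinalities: $|u \cap w| = |u \cap v|$ or $|u \cap w| = |v \cap w|$. Every such edge is automatically a back edge, because $u$ is an ancestor of $w$. Hence the 2-layer back edges are exactly the triples satisfying this cardinality condition. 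This holds because Corollary~\ref{cor:2LayerBackEdge} guarantees that the relevant triples are grandparent–parent–child paths, and these are the only triples the algorithm inspects.

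Next, interpret the counters. Since the nodes containing $\vtx{x}$ form a subtree of $T$, that subtree has a unique topmost node, and every other node containing $\vtx{x}$ is a descendant of it. In a top-down order this topmost node is visited first, so after it is processed $\phi(\vtx{x})$ equals its depth. Consequently, for $\vtx{x} \in w$, the inequality $\phi(\vtx{x}) < \dep_T(w) - k$ holds if and only if the ancestor of $w$ at distance $k+1$ contains $\vtx{x}$, by connectivity of the subtree along $T[w, r]$. With $k = 0$ this gives $\alpha(w) = |w \cap v|$, and with $k = 1$ it gives $\beta(w) = |w \cap u|$. Likewise $\alpha(v) = |v \cap u|$, and $\alpha(v)$ has already been computed because $v$ precedes $w$. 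So the test $\beta(w) \in \{\alpha(v), \alpha(w)\}$ is precisely the cardinality condition above, which establishes correctness.

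For the running time, a top-down order (e.g.\ a BFS from the root), together with depths and parent and grandparent pointers, is computable in $\calO(|T|) \subseteq \calO(N)$ time. Initialising $\phi$ costs $\calO(|V|) \subseteq \calO(N)$, assuming every vertex lies in some hyperedge. The inner loop performs constant work per vertex occurrence, for $\sum_w |w| = N$ iterations in total. The test and output cost $\calO(1)$ per node.

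The main obstacle is the converse direction of the characterisation, namely that $uw = vw$ or $uw = uv$ forces $uw \in G$. I would handle it by the edge-swap argument above, in the same style as the proof of Lemma~\ref{lem:bipartiteClique}.
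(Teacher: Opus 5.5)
Your proposal is correct and follows essentially the same route as the paper: identify $\alpha$ and $\beta$ as the sizes of the intersections with the parent and grandparent, and show that $uw \in G$ if and only if $uw \in \{uv, vw\}$, using Corollary~\ref{cor:2LayerBackEdge} for the forward direction. The only differences are minor. You justify the converse by an explicit edge swap rather than by citing Lemma~\ref{lem:unionJoinProperties}, which makes that step self-contained, since that lemma's third condition is phrased over all join trees. You also read $\beta(w)$ directly as $|u \cap w|$ instead of $|uv \cap vw|$.
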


\begin{proof}
We first observe that $\phi(\cdot)$ states for a vertex the depth of the highest node containing it.
The values $\alpha(\cdot)$ and~$\beta(\cdot)$ then count the number of vertices a node shares with its parent and grand parent, respectively.
Therefore, if $u \edge v \edge w$ is a path in~$T$ such that $u$ is the parent of~$v$ and $v$ is the parent of~$w$, then $\alpha(v) = |uv|$, $\alpha(w) = |vw|$, and $\beta(w) = |uv \cap vw|$.
To prove the correctness of Algorithm~\ref{algo:All2LayerBackEdge}, we show that $uw \in G$ if and only if $\alpha(v) = \beta(w)$ or $\alpha(w) = \beta(w)$.

First, assume that $uw \in G$.
By Corollary~\ref{cor:2LayerBackEdge}, either $uw = uv \subseteq vw$, or $uv \supseteq vw = uw$.
In the first case, $uv \subseteq vw$ implies that $\beta(w) = |uv \cap vw| = |uv| = \alpha(v)$.
In the second case, $uv \supseteq vw$ implies that $\beta(w) = |uv \cap vw| = |vw| = \alpha(w)$.

Next, assume that $\alpha(v) = \beta(w)$ or $\alpha(w) = \beta(w)$.
It follows from the properties of join trees and the definitions of $\alpha$ and~$\beta$ that $u \cap w = uv$ or $u \cap w = vw$.
In either case, Lemma~\ref{lem:unionJoinProperties} then implies that $uw \in G$.

To prove the algorithm's complexity, observe that it iterates over each hyperedge~$w$ of~$H$ (line~\ref{line:iterNodes}) and then over each vertex~$\vtx{x}$ contained in~$w$ (line~\ref{line:iterVertices}).
The algorithm then performs a few constant-time operations for each such iteration.
Therefore, Algorithm~\ref{algo:All2LayerBackEdge} runs in $\calO(N)$ total time.
\qed
\end{proof}

\section{Canonical Join Trees}

In this section, we present the main results of our paper.
We start by characterising canonical join trees themself and later characterise acyclic hypergraphs based on whether or not they admit such trees.

\begin{lemma}\label{lem:canUnique}
A canonical join tree is always unique with respect to its root.
\end{lemma}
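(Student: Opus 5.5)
The plan is to argue by contradiction, using only that both trees realise the minimum depth of every node. Suppose \( T \) and \( T' \) are both canonical join trees rooted in~\( r \). By Definition~\ref{def:canJoinTree} applied in both directions, \( \dep_T(w) = \dep_{T'}(w) \) for every node~\( w \). If \( T \neq T' \), some node~\( u \not\equiv r \) has parent~\( p \) in~\( T \) and parent~\( p' \not\equiv p \) in~\( T' \). Let \( d = \dep_T(u) = \dep_{T'}(u) \). Then \( \dep_T(p) = \dep_{T'}(p) = d-1 \), and likewise \( \dep_T(p') = \dep_{T'}(p') = d-1 \).

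\emph{Step 1: \( up = up' \).} Since \( up' \in T' \), Lemma~\ref{lem:unionJoinProperties} gives \( up' \in G \). In~\( T \), the node \( p' \) has depth \( d-1 \) and \( p' \not\equiv p \), so it is not an ancestor of~\( u \). It is also not a descendant of~\( u \), since its depth is smaller. Hence the path \( T[u,p'] \) leaves \( u \) upward through its parent~\( p \). By the join tree property, every vertex of \( u \cap p' \) lies in every node of \( T[u,p'] \), in particular in~\( p \); so \( up' \subseteq up \). The symmetric argument in~\( T' \) gives \( up \subseteq up' \), and therefore \( up = up' \).

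\emph{Step 2: a shallower position for~\( u \) in~\( T \).} Let \( c \) be the lowest common ancestor of \( p \) and \( p' \) in~\( T \). Since \( p \) and \( p' \) are distinct nodes of equal depth, neither is an ancestor of the other, so \( \dep_T(c) < d-1 \). Moreover, \( c \) lies on \( T[u,p'] = u \edge p \path c \path p' \). Hence every vertex of \( up' = up \) lies in~\( c \), so \( up \subseteq u \cap c \). Conversely, \( p \) lies on \( T[u,c] \), so \( u \cap c \subseteq up \), and thus \( uc = up \). As in the proof of Lemma~\ref{lem:bipartiteClique}, removing the edge \( up \) from~\( T \) and adding \( uc \) yields a join tree~\( T'' \) rooted in~\( r \). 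In~\( T'' \) the subtree of~\( u \) now hangs below~\( c \), so \( \dep_{T''}(u) = \dep_T(c) + 1 < d = \dep_T(u) \). This contradicts the canonicity of~\( T \), so \( T = T' \).

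The main obstacle is Step~2, namely verifying that the replacement really gives a join tree. This needs \( uc = up \) with \( up \) on the path \( T[u,c] \), which is exactly the exchange condition used in the proof of Lemma~\ref{lem:bipartiteClique}. The key point making it work is that \( c \) lies on \( T[u,p'] \), which in turn rests on \( p' \) being neither an ancestor nor a descendant of~\( u \) in~\( T \).
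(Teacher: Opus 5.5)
Your proof is correct and follows essentially the same route as the paper: take a node $u$ with different parents in the two trees, observe that $u$'s separator lies in the lowest common ancestor of the two parents, and re-attach $u$ there to strictly reduce its depth. The only difference is that the paper chooses a highest such $u$, so that this common ancestor is the same in both trees, and performs the exchange in $T_2$; you instead first derive $up = up'$ by the symmetric path argument and then do the exchange entirely within $T$, which makes the extremal choice unnecessary.
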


\begin{proof}
Let us assume that there are two distinct canonical join trees $T_1$ and~$T_2$ with the same root.
Note that, by definition, $\dep_{T_1}(u) = \dep_{T_2}(u)$ for each node~$u$.
Therefore, some nodes must have different parents in $T_1$ and~$T_2$.
Let $u$ be a highest such node with parents $p_1$ and~$p_2$ in $T_1$ and~$T_2$, respectively.
That is, all nodes~$v$ with $\dep(v) < \dep(u)$ have the same parent in both trees.

Recall that, by properties of join trees, $up_2 \subseteq v$ for all nodes~$v \in T_1[u, p_2]$.
That includes $p_1$ as well as the lowest common ancestor~$q$ of $p_1$ and~$p_2$.
Thus, since $up_2$ is an edge of~$T_2$, we can create a join tree~$T_3$ by removing $up_2$ from~$T_2$ and adding edge~$uq$ instead.
However, since $\dep(q) < \dep(p_1) = \dep(p_2)$, $\dep_{T_3}(u) < \dep_{T_2}(u)$.
This contradicts with our initial assumption that $T_1$ and~$T_2$ are canonical.
\qed
\end{proof}

\begin{theorem}\label{theo:canonIffBackEdge}
A rooted join tree~\( T \) is canonical if and only if \( T \) does not induce any back edges in its union join graph.
\end{theorem}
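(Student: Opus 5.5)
The plan is to prove the two directions separately. The forward direction shortens a single edge. The backward direction takes a minimal counterexample and turns the resulting non-tree edge of~$G$ into a back edge using Lemma~\ref{lem:bipartiteClique}.

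\emph{Canonical $\Rightarrow$ no back edges.} Suppose $T$ is canonical but induces a back edge $xy \in G$, with $x$ an ancestor of~$y$. Since $T[x, y]$ is a vertical path, Corollary~\ref{cor:2LayerBackEdge} yields a path $u \edge v \edge w$ on it with $u$ the grandparent of~$w$ and $uw \in G$. We then split into the two cases of the corollary.
\begin{itemize}
\item If $uw = vw \subseteq uv$, remove $vw$ from~$T$ and add~$uw$. This is a join tree, as in the proof of Lemma~\ref{lem:bipartiteClique}, and $w$ has depth $\dep_T(w) - 1$ in it.
\item If $uw = uv \subseteq vw$, remove $uv$ and add~$uw$, so that $v$ hangs below~$w$. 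Node~$w$ now sits at depth $\dep_T(u) + 1 = \dep_T(w) - 1$.
\end{itemize}
In both cases the new tree has root~$r$, and $w$ is strictly shallower than in~$T$. This contradicts Definition~\ref{def:canJoinTree}. Note that the definition only needs one node to get shallower; it does not matter that $v$ gets deeper in the second case.

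\emph{No back edges $\Rightarrow$ canonical.} Suppose $T$ has no back edges but some join tree~$T'$ rooted at~$r$ has a node~$u$ with $\dep_{T'}(u) < \dep_T(u)$. Among all such~$u$, pick one with minimum $\dep_{T'}(u)$, and let $p'$ be its parent in~$T'$. By minimality,
\[
\dep_T(p') \le \dep_{T'}(p') = \dep_{T'}(u) - 1 \le \dep_T(u) - 2 .
\]
We have $up' \in G$ because it is an edge of~$T'$. It is not a tree edge of~$T$, because the depths of $u$ and $p'$ in $T$ differ by at least two. As $T$ has no back edges, it is a cross edge.

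Let $c$ be the lowest common ancestor of $u$ and~$p'$ in~$T$. The inequality above gives $\dis_T(u, c) \ge \dis_T(c, p') + 2 \ge 3$, using $\dis_T(c, p') \ge 1$ for a cross edge. Apply Lemma~\ref{lem:bipartiteClique} to write $T[u, p'] = u \path a \edge b \path p'$ with $ab = up'$, where every node of $T[u, a]$ is $G$-adjacent to every node of $T[b, p']$. We then locate the edge~$ab$.
\begin{itemize}
\item If $ab$ lies on the segment $T[u, c]$, then $c \in T[b, p']$. So $uc \in G$, and $c$ is a proper ancestor of~$u$ at distance $\ge 3$, which is a back edge.
\item If $ab$ lies on $T[c, p']$ and $b \not\equiv p'$, then $c \in T[u, a]$ and $p' \in T[b, p']$. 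So $cp' \in G$, and this is a back edge because $c$ is an ancestor of~$p'$ but not its parent.
\end{itemize}

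The step I expect to be the main obstacle is the remaining degenerate case $a \equiv c$, $b \equiv p'$. Here $p'$ is a child of~$c$ and $up' = cp'$, so every node of $T[u, c]$ contains~$cp'$. The bipartite adjacencies then only give cross edges between the $u$-branch and~$p'$.

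My first attempt will be to strengthen the choice of counterexample. I would first normalise~$T'$ by making all nodes adjacent to their highest possible ancestor, checking that this does not increase any depth. Alternatively, I would choose $u$ minimal with respect to a secondary key, such as $\dis_T(u, c)$, so that the degenerate case forces a smaller counterexample. If that fails, the fallback works inside~$T$. Take the topmost edge $de$ on $T[u, c]$ whose intersection equals $u \cap c$. Then use Corollary~\ref{cor:2LayerBackEdge} on the vertical path $T[u, c]$ to show that either $uc \in G$, which is a back edge, or some vertical pair at distance two on $T[u, c]$ is $G$-adjacent. Either outcome is a back edge, which is the contradiction we need.
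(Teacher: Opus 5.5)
Your forward direction is correct. The paper's version is shorter: by Lemma~\ref{lem:unionJoinProperties} some edge of $T[x,y]$ equals $xy$, and swapping it for $xy$ lifts $y$. Your reduction of the backward direction is also correct, down to the case where $p'$ is a child of $c$ and $ab \equiv cp'$. (Your second bullet should be conditioned on $\dis_T(c,p') \ge 2$ rather than on $b \not\equiv p'$, but the same argument covers it.)

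\textbf{The gap.} That remaining case is where the real work lies, and you leave it open. Your fallback cannot close it, because it reasons only inside $T$, and this configuration occurs in trees without back edges. Take $c = \{1,2,3\}$, $p' = \{1,4\}$, $d = \{1,2,5\}$, $s = \{1,5,6\}$, $u = \{1,6,7\}$, and let $T$ have edges $cp', cd, ds, su$, rooted at $c$. Then $u \cap p' = cp' = \{1\}$, $\dis_T(u,c) = 3$ and $\dep_T(p') = \dep_T(u) - 2$. Yet the only non-tree edges of $G$ are the cross edges $p'd, p's, p'u$. In particular $u \cap c = \{1\}$ equals no edge of $T[u,c]$, so $uc \notin G$ and Corollary~\ref{cor:2LayerBackEdge} has nothing to act on. The contradiction has to come from $T'$ and the common root $r$.

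\textbf{The missing step.}
First, every edge $e$ of $T[u,c]$ strictly contains $up'$. Indeed $up' \subseteq u \cap c \subseteq e$, and equality would give $uc \in G$, which is a back edge.

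Next, delete $up'$ from $T'$. Let $U'$ be the component containing $u$ (its $T'$-subtree) and $R'$ the component containing $p'$ and $r$. Any $g \in U'$ and $h \in R'$ satisfy $g \cap h \subseteq up'$. On the other hand, every edge of $T[u,p']$ contains $up'$. So $T[u,p']$ can pass between $U'$ and $R'$ only over an edge equal to $up'$, i.e., only over $cp'$.

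Hence $c \in U'$, so $c \not\equiv r$. The vertical path $T[c,r]$ then has an edge $yz$, with $z$ its upper end, joining $U'$ to $R'$. Thus $yz \subseteq up' = cp' \subseteq p'$. Because $yz \in T[z,p']$, this gives $z \cap p' = yz$. So replacing $yz$ by $zp'$ in $T$ gives a join tree, and $zp' \in G$. Since $z$ is a proper ancestor of $c$, $zp'$ is a back edge of $T$, a contradiction.

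With this step added, your minimal-counterexample argument is a complete proof. It is a more local alternative to the paper's Kruskal induction over classes of equal edges.
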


\begin{proof}
$\rightarrow$:
Assume that $T$ induces a back edge~$uv$ with $u$ being an ancestor of~$v$, \ie, $\dep_T(u) + 1 < \dep_T(v)$.
By Lemma~\ref{lem:unionJoinProperties}, there is an edge~$xy \in T[u, v]$ with $uv = xy$.
We can therefore create a new join tree~$T'$ by removing $xy$ from~$T$ and adding edge~$uv$ instead.
Now we have that \[ \dep_{T'}(v) = \dep_{T'}(u) + 1 = \dep_T(u) + 1 < \dep_T(v). \]
Thus, $T$ is not canonical.

$\leftarrow$:
Recall that a tree is a join tree of~$H$ if and only if it is a maximum spanning tree of~$G$.
To prove the claim, we construct $T$ via Kruskal's algorithm and show by induction that all constructed subtrees satisfy the claim in each step of the algorithm.
Before that, however, we need a few preliminaries.

We start by partitioning $G$'s edges into equivalence classes such that two edges~$uv, xy$ are in the same class if and only if $uv = xy$.
Afterwards, we sort these classes into a decreasing order~$\langle C_1, C_2, \dots, C_k \rangle$ based on the cardinality of their respective edges.
That is, $C_i$ preceding~$C_j$ implies that $|xy| \geq |uv|$ where $xy \in C_i$ and $uv \in C_j$.
At this point, we can make some interesting observations about Kruskal's algorithm:

\begin{subclaim}\label{claim:fixedEdgeClassOrder}
One can use Kruskal's algorithm to construct all join trees of~\( H \) from any decreasing order of edge classes by only reordering edges within their classes.
\end{subclaim}

\begin{proof}[Claim]
Let $T'$ be an arbitrary join tree of~$H$.
Sort the edges in each class in such a way that edges of~$T'$ precede non-tree edges.
We claim that Kruskal's algorithm constructs~$T'$ from that order.

Assume that the algorithm does not construct~$T'$.
Since the algorithm process all edges and always produces a join tree, that tree must contain an edge~$uv \notin T'$.
Hence, there is an edge~$xy \in T'[u, v]$ which the algorithm did not add, but which connects two separate subtrees at the time $uv$ is processed.
By properties of join trees, $xy \supseteq uv$.
If $xy \supset uv$, then the algorithm processes $xy$ before~$uv$, since it belongs to a class with ``larger'' edges.
Alternatively, if $xy = uv$, then the algorithm also processes $xy$ before~$uv$, because of the way we sorted edges within classes containing them.
In either case, the algorithm adds $xy$ to the resulting tree which contradicts our assumption.
\qedClaim
\end{proof}

\begin{subclaim}\label{claim:sameEdgeSets}
For a given order of edge classes, the partition of nodes into connected components before and after processing a class is an invariant shared by all join trees of~\( H \).
\end{subclaim}

\begin{proof}[Claim]
Assume that there are two different node orders $\sigma$ and~$\tau$ and a node pair~$x,y$ such that, after processing some edge class, $x,y$ belong to the same node set~$S_\sigma$ when processing $\sigma$ but to two different sets when processing~$\tau$.
Let $T_\sigma$ be the subtree formed by~$S_\sigma$ and let $S_{\tau,x}$ be the set containing~$x$ when processing~$\tau$.
Now consider the path~$T_\sigma[x, y]$.
It must contain an edge~$uv$ such that $u \in S_{\tau,x}$ and $v \notin S_{\tau,x}$.
Let $S_{\tau,v}$ be the set containing~$v$.
Since $uv \in T_\sigma$, the algorithm has processed all edges of its equivalent class, \ie, it has processes $uv$ for both node orders.
It would have therefore added $uv$ and combined $S_{\tau,x}$ and $S_{\tau,v}$ into one set, leading to a contradiction.
\qedClaim
\end{proof}

We now perform an induction over $\langle C_1, C_2, \dots, C_k \rangle$.
That is, at step~$i$ of the induction, all edges in~$C_1, \dots, C_i$ have been processed by the algorithm.
For each step, we show that the constructed subtrees of~$T$ are canonical for their respective roots if $T$ does not induce back edges.
Clearly, that statement is correct for $i = 0$ where each subtree only contains a single node.

Assume that all subtrees generated from~$C_1, \dots, C_{i-1}$ satisfy the statement and that we are adding the edges from~$C_i$ next.
Let $\calS_i = \{ S_1, S_2, \dots \}$ denote affected subtrees.
Recall that $T$ is a rooted tree.
Thus, each set~$S_j$ forms a rooted subtree of~$T$.
Additionally, by adding the edges from~$C_i$, the sets in~$\calS_i$ themself form a rooted tree with one of them as root.
Let $R_i$ denote that root.
Let $x_jy_j$ denote an added edge from $y_j \in S_j \neq R$ to its parent~$x_j \notin S_j$.
Hence, $y_j$ is the root node of~$S_j$.

\begin{subclaim}\label{claim:yIsOnlyRoot}
For each subtree~\( S_j \neq R_i \), \( \{ \, u \in S_j \mid u \supseteq x_jy_j \, \} = \{ \, y_j \, \} \). That is, \( y_j \) is the only possible root for~\( S_j \).
\end{subclaim}

\begin{proof}[Claim]
Assume that $S_j$ contains another node~$z_j$ with $z_j \supseteq x_jy_j$.
Hence, $x_jz_j = x_jy_j$ and, by Lemma~\ref{lem:unionJoinProperties}, $x_jz_j \in G$.
However, since only one of $y_j$ and~$z_j$ is the root of~$S_j$ and adjacent to~$x_j$ in~$T$, the other is a non-child descendant of~$x_j$.
Therefore, either $x_jy_j$ or~$x_jz_j$ forms a back edge which contradicts with $T$ not inducing such edges.
\qedClaim
\end{proof}

\begin{subclaim}\label{claim:uniqueX}
For all \( S_j \neq R_i \), \( x_j \) is the highest node in~\( R_i \) such that~\( x_j \supseteq x_jy_j \).
\end{subclaim}

\begin{proof}[Claim]
Firstly, recall that the sets in~$\calS$ form a tree with $R_i$ as root.
Now assume that there is one set~$S_j$ which is child of~$S_k \neq R_i$.
Then, $T$ contains a path $y_j \edge x_j \path y_k \edge x_k$ with $x_j \in S_k$.
However, since $y_j \supseteq x_jy_j = x_ky_k \subseteq x_k$, it follows form Lemma~\ref{lem:unionJoinProperties} that $y_jx_k \in G$ forms a back edge.
Therefore, all $S_j \neq R_i$ are children of~$R_i$.
Lastly, assume that some $x_j$ has an ancestor~$z \in R_i$ with $z \supseteq x_jy_j$.
Then, by Lemma~\ref{lem:unionJoinProperties}, $y_jz \in G$ forms a back edge.
\qedClaim
\end{proof}

We can now complete the inductive step.
Let $u$ be a node in~$\bigcup_{S \in \calS_i} S$ an let $r_i$ be the root node of~$R_i$.
Since Claim~\ref{claim:fixedEdgeClassOrder} and Claim~\ref{claim:sameEdgeSets} imply that creating any join tree would produce the same sets~$\calS_i$ (although the subtrees formed from each set may differ), we prove the statement by showing that the path~$u \path r_i$ in~$T$ has minimal length over all join trees.

The statement is correct by induction hypothesis if $u \in R_i$.
We therefore assume that $u \in S_j \neq R_i$.
In that case, $u \path r_i = u \path y_j \edge x_j \path r_i$.
Note that, by Claim~\ref{claim:yIsOnlyRoot}, $y_j$ is the only possible root for~$S_j$ even when constructing another join tree than~$T$.
Additionally, by induction hypothesis, the distance from $u$ to~$y_j$ is minimal in~$T$.
Clearly, the depth of~$y_j$ is minimal if it is adjacent to the highest~$x \supseteq x_jy_j$.
By Claim~\ref{claim:uniqueX}, $x_j$ is the highest such node and, by induction hypothesis, the distance from $x_j$ to~$r_i$ is minimal in~$T$.
Therefore, $u \path r_i$ has minimal length in~$T$, too.
\qed
\end{proof}

\subsection{Hypergraphs without Canonical Join Trees}

It follows directly from Theorem~\ref{theo:canonIffBackEdge} that a hypergraph does not admit any canonical join tree if and only if each join tree induces a back edge.
In the following, we define a subgraph which enforces that property.
Indeed, we show below (Theorem~\ref{theo:NoCanonicalIffBowtie}) that if one join tree forms it, then all join trees do.

Let $P = w \edge v \edge u \path x \edge y \edge z$ be a path on a join tree~$T$ where $uv \subset vw$ and $xy \subset yz$ and, hence, $uw, xz \in G$ (see Figure~\ref{fig:bowtie}).
We call the subgraph formed by $P$, $uw$, and~$xz$ a \emph{stretched bowtie}.
We define its \emph{length} as the distance in~$T$ from $v$ to~$x$.
In its shortest form (with length~$0$), $x \equiv v$ and $y \equiv u$.
It is equal to a classical bowtie for the case that $u \equiv x$ (length~$1$).
For the rest of this paper, we refer to such a graph simply as \emph{bowtie}, independent of its length.
Note that we do not define bowties as induced graphs.
$G$ may contain additional edges between any nodes of a bowtie.

\begin{figure}[ht]
    \centering
    \includegraphics{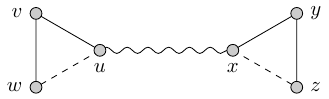}
    \hfil
    \includegraphics{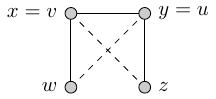}
    \caption
    {%
        A general \emph{bowtie} (left) and one with length~$0$ (right).
        Straight solid lines are tree edges, dashed lines are non-tree edges, and the waved line is a path in the join tree with arbitrary length.
        $G$ may contain additional edges between shown nodes.
    }
    \label{fig:bowtie}
\end{figure}

Theorem~\ref{theo:NoCanonicalIffBowtie} below shows that bowties characterise hypergraphs without any canonical join tree and that a bowtie for one join tree implies a one for all join trees.

\begin{theorem}\label{theo:NoCanonicalIffBowtie}
The following are equivalent.
\begin{enumerate}[$(i)$]
    \item
        \label{item:noCanTree}
        \( H \) does not admit a canonical join tree.
    \item
        \label{item:allTreesHaveBowtie}
        All join trees of~\( H \) form a bowtie.
    \item
        \label{item:someTreeHasbowtie}
        Some join tree of~\( H \) forms a bowtie.
\end{enumerate}
\end{theorem}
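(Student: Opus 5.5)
We prove the cycle \((ii)\Rightarrow(iii)\Rightarrow(i)\Rightarrow(ii)\). The step \((ii)\Rightarrow(iii)\) is trivial: some join tree exists because \(H\) is acyclic, and by \((ii)\) it forms a bowtie. For the other steps we use one basic observation. Call three nodes with \(w \edge v \edge u\) in a join tree and \(uv \subset vw\) a \emph{half bowtie} centred at~\(v\). By Lemma~\ref{lem:unionJoinProperties} it gives an edge \(uw \in G\) with \(uw = uv\). Under a choice of root, \(uw\) is a back edge unless the root's projection onto the path \(w \edge v \edge u\) is \(v\) itself.

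\textbf{\((iii)\Rightarrow(i)\).} A bowtie \(w \edge v \edge u \path x \edge y \edge z\) in a join tree \(T\) contains two half bowties, centred at \(v\) and at~\(y\). We have \(v \not\equiv y\) in all cases, including length~\(0\), where \(x \equiv v\) and \(y \equiv u\). Fix any root. Its projection onto the bowtie path is a single node, so it cannot equal both \(v\) and~\(y\). Hence \(uw\) or \(xz\) is a back edge of~\(T\), and by Theorem~\ref{theo:canonIffBackEdge} \(T\) is canonical for no root. This only handles one tree, though. To get \((i)\) we need that \emph{every} join tree forms a bowtie, i.e.\ \((iii)\Rightarrow(ii)\); after that, \((ii)\Rightarrow(i)\) follows by the argument just given applied to each tree. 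So we prove \((iii)\Rightarrow(ii)\) directly.

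\textbf{Transfer to another join tree \(T'\).} Let \(T\) contain the half bowtie \(w \edge v \edge u\). By Lemma~\ref{lem:bipartiteClique}, \(T'[u,w]\) has an edge \(ab = uw\). Both \(v\) and \(w\) contain \(vw \supsetneq uw\). Consider the subtree of~\(T'\) induced by the nodes containing \(vw\); call it \(T'_{vw}\). Since \(T'_{vw}\) lies on the \(w\)-side of some edge of \(T'[u,w]\), we walk from \(u\) towards \(T'_{vw}\). We look for the last edge \(a'b'\) with \(a'b' = uw\) before entering a node \(c\) whose edge towards \(T'_{vw}\) is strictly larger. This should produce a half bowtie \(c' \edge b' \edge a'\) in~\(T'\), centred at a node containing \(vw\). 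I intend to argue this with the sets-of-vertices subtree property and Lemma~\ref{lem:bipartiteClique}. Doing the same for the second half gives a half bowtie centred at a node containing \(yz\). What remains is to check that the two centres are distinct and that the two halves are oriented away from each other along~\(T'\). For this I would use that \(vw\) and \(yz\) occur on opposite sides of the edge intersections \(uv\) and~\(xy\), which are preserved in \(T'\) as separators by Lemma~\ref{lem:unionJoinProperties}\((iii)\). This orientation bookkeeping is the step I expect to be the main obstacle.

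\textbf{\((i)\Rightarrow(iii)\).} Fix a root \(r\). Take a join tree \(T\) in which all nodes are adjacent to their highest possible ancestor. Since it is not canonical, Theorem~\ref{theo:canonIffBackEdge} and Corollary~\ref{cor:2LayerBackEdge} give a 2-layer back edge \(uw\) with \(u\) the grandparent of~\(w\). The case \(uw = vw \subseteq uv\) is impossible: there \(u \supseteq vw\), so \(w\) would have been attached to \(u\) or higher. Thus \(uw = uv \subsetneq vw\), i.e.\ a half bowtie whose centre \(v\) is not the root side. Now re-root at~\(w\). The hypergraph still has no canonical tree for root~\(w\), so the same argument yields a half bowtie in a highest-ancestor tree rooted at~\(w\). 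Transfer both half bowties into one common tree using the transfer step above. Because the roots were chosen on opposite sides, the two centres point away from each other and form a bowtie, which proves~\((iii)\).
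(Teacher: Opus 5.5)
Your projection argument is correct: one bowtie in a join tree \(T\) gives a back edge for every choice of root, so \((ii)\Rightarrow(i)\) holds. The highest-ancestor argument that yields \(uv \subset vw\) is also correct.

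\textbf{The transfer step.} Both \((iii)\Rightarrow(i)\) and \((i)\Rightarrow(iii)\) rest on it, and it is only sketched. It is not mere bookkeeping: it carries the entire content of \((iii)\Rightarrow(i)\), and tracking the separator \(uv\) cannot work. Take \(T_1 = w \edge v \edge u \edge m \edge x \edge y \edge z\) with \(uv \subset vw\), \(xy \subset yz\) and \(um \subset uv\).
\begin{itemize}
\item Since \(v \cap m = um\), replacing \(um\) by \(vm\) gives a join tree \(T'\) in which \(u\) is a leaf at \(v\).
\item The only edge of \(T'\) equal to \(uv\) is \(vu\). The only half bowtie using it is \(w \edge v \edge u\), which points at the leaf \(u\), away from \(x,y,z\).
\item \(T'\) does contain a bowtie, \(w \edge v \edge m \edge x \edge y \edge z\), but its small edge \(vm\) is a new separator.
\end{itemize}
So the orientation cannot be recovered from the original separators.

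The paper never transfers bowties. For \((iii)\Rightarrow(i)\) it fixes a bowtie of minimum length over all join trees and assumes a canonical \(T'\). It shows that the \(T'\)-parents of \(u\) and \(x\) lie on the \(v\)-side of \(uv\) and on the \(y\)-side of \(xy\) in \(T\), respectively. Following \(T[u,x]\) through \(T'\) then produces a strictly shorter bowtie, contradicting minimality.

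\textbf{The last step of \((i)\Rightarrow(iii)\).} This fails independently of transfer. A half bowtie whose small edge points toward \(r\) and one whose small edge points toward \(w\) need not point toward each other. Here is a configuration where this happens.
\begin{itemize}
\item Let \(u\) meet no hyperedge other than \(v\) and \(w\), so \(u\) is a leaf in every join tree. Let an unrelated length-\(0\) bowtie hang beyond \(w\).
\item With \(r = u\) you may pick \(w \edge v \edge u\).
\item In any highest-ancestor tree rooted at \(w\), \(u\) becomes a child of \(w\), since \(w \supseteq uv\). The 2-layer back edge you find next can then be a half of the remote bowtie, with its small edge pointing back toward \(w\).
\item The half bowtie through \(u\) always ends in the leaf \(u\), so these two halves form a bowtie in no join tree.
\end{itemize}

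The missing idea, used in the paper, is to take a \emph{lowest} 2-layer back edge \(u,v,w\) and re-root at the centre \(v\) rather than at \(w\). Then the branches of \(v\) other than \(u\)'s contain no back edge and are unchanged by the highest-ancestor step. Hence every 2-layer back edge \(x,y,z\) of \(T_v'\) lies in \(v\) together with \(u\)'s side, with its small edge toward \(v\). Highest-ancestor moves preserve ancestor relations, so Lemma~\ref{lem:unionJoinProperties} lifts it to consecutive \(a,b,c\) in \(T_v\) with \(ab \subset bc\). This gives the bowtie \(w \edge v \edge u \path a \edge b \edge c\) inside a single tree, with no transfer needed.
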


\begin{proof}
$\text{\ref{item:noCanTree}} \rightarrow \text{\ref{item:allTreesHaveBowtie}}$:
Let $T$ be an arbitrary join tree for~$H$.
Since it is not canonical, Theorem~\ref{theo:canonIffBackEdge} and Corollary~\ref{cor:2LayerBackEdge} imply that it contains a 2-layer back edge.
Let $u,v,w$ form the lowest such edge with respect to~$T$.
We now construct two additional join trees.
First, let $T_v$ be the resulting join tree from rooting $T$ in~$v$.
We then construct $T_v'$ from~$T_v$ by making each node adjacent to its highest possible ancestor.

Since $T_v'$ is not canonical, it also contains a 2-layer back edge formed by $x,y,z$.
Due to Corollary~\ref{cor:2LayerBackEdge}, we know that $xz = xy \subseteq yz$ or $xy \supseteq yz = xz$.
Additionally, since $y$ is the highest possible ancestor of~$z$, it follows that $xy \nsupseteq yz$ and, therefore, $xz = xy \subset yz$.

Observe that making nodes adjacent to their highest possible ancestor does not introduce new or reverse existing ancestor-descendant relationships.
Therefore, there is a path $x \path y \path z$ in~$T_v$.
Lemma~\ref{lem:unionJoinProperties} then implies that $T_v$ contains two edges $x'y' \in T_v[x, y]$ and $y'z' \in T_v[y, z]$ with $x'y' = xy$ and~$y'z' = yz$.
Additionally, by properties of join trees, $x'y' \subseteq e$ for each edge~$e \in T_v[x', z']$.
Thus, somewhere on that path, there are three consecutive nodes~$a,b,c$ with $ab \subset bc$.
Due to the latter, we can create a new join tree for~$H$ by disconnecting $ab$ and adding edge~$ac$ instead.
Therefore, $ab, ac, bc \in G$ with $ac = ab \subset bc$.

Recall that we have chosen $u, v, w$ as the lowest 2-layer back edge in~$T$.
Because of that and the way we constructed~$T_v$, $x, y, z$ and $a, b, c$ must be in the subtree (of~$T_v$) induced by $v$, $u$ and $u$'s descendants.
Therefore, $w \edge v \edge u \path a \edge b \edge c$ forms a bowtie in~$T_v$ and, since $T$ and~$T_v$ only differ by their roots, also in~$T$.

$\text{\ref{item:someTreeHasbowtie}} \rightarrow \text{\ref{item:noCanTree}}$:
Let $T$ be a join tree for~$H$ containing a bowtie~$\calB$ formed by the nodes $u,v,w$ and $x,y,z$, such that the path~$P = w \edge v \edge u \path x \edge y \edge z$ has minimum length.
That is, there is no join tree~$T'$ for~$H$ where $\dis_{T'}(w, z) < \dis_T(w, z)$.

Assume that $H$ admits a canonical join tree~$T'$, and that we construct~$T'$ by mimicking Kruskal's algorithm.
That is, we start with each node forming their own subtree and step by step connect them with edges in decreasing order.
Consider the state of the algorithm right before adding an edge~$e_a = uv$.
(Note that we do not know if $T'$ contains $uv$, $uw$, or a different edge which corresponds to the same vertices of~$H$.)
Since $uv = uw \subset vw$, there are two disjoint subtrees $T'_{vw}$ and~$T'_u$ which contain the nodes $v,w$ and $u$, respectively.
Adding $e_a$ then combines them into one subtree.
Since $T'$ is rooted (it is canonical), let $a$ be the parent node of~$e_a$.

We first assume that $a \in T'_u$.
Since $a \supseteq e_a = uv = uw$, it follows that $av, aw \in G$.
Also, since $a$ is only adjacent to one node in~$T'_{vw}$, at least one of~$v,w$ is not a child of~$a$ in~$T'$.
Thus, $av$ or~$aw$ forms a back edge in~$G$ with respect to~$T'$.
This contradicts with $T'$ being canonical (see Theorem~\ref{theo:canonIffBackEdge}) and, therefore, implies that $a \notin T'_u$.

Since $a \in T'_{vw}$, $a$ must also be on the ``$v$-side'' of the edge~$uv$ in~$T$.
Additionally, since $a \supseteq e_a = uv = uw$, $au \in G$.
Therefore, $u$ must be a child of~$a$ in~$T'$ (\ie $e_a \equiv au$), because $au$ would otherwise be a back edge.

We can repeat the logic above for $x,y,z$ and conclude that there is a node~$b$ on the ``$y$-side'' of the edge~$xy$ in~$T$ which, in~$T'$, is the parent of~$x$.
Due to their locations in~$T$, we also know that~$a \not\equiv b$.

We now show that $\calB$ must have a length~$\ell > 1$ in~$T$.
First, we consider the case that $\ell = 0$, \ie, $v \equiv x$ and $u \equiv y$.
In such a case, $e_a$ and~$e_b$ are the same edge (since $uv \equiv xy$) and, thus, $a$ and~$b$ must be the same node.
Second, assume that $\ell = 1$, \ie, $u \equiv x$.
Then, again, $a$ and~$b$ must be the same node, since both are the parent of~$u \equiv x$.
Therefore, both cases contradict with the earlier observation that~$a \not\equiv b$.

Note that, due to $\ell > 1$ and their locations in~$T$, the nodes $a$, $b$, $u$ and~$x$ are pairwise distinct.
It therefore follows that, in~$T'$,
\begin{enumerate*}[(1)]
    \item
        \label{case:xNotDescA}
        $x$ is not a descendant of~$a$, or
    \item
        \label{case:uNotDescB}
        $u$ is not as descendant of~$b$
\end{enumerate*}
(or both).
Consider case~\ref{case:xNotDescA}.
Because $\ell > 1$, the length of~$T[u, x]$ is non-zero, too.
Additionally, $T[u, x]$ contains neither~$a$ nor~$b$.
We now follow $T[u, x]$ through~$T'$.
Since $a \notin T[u, x]$ and $x$ is not a descendant of~$a$, there is an edge~$u'x'$ in~$T[u, x]$ such that $u'$ is a descendant of~$a$, $x'$ is not descendant of~$a$, and, thus, $a \in T'[u', x']$.
Note that $a$ and $u'x'$ are on different sides of~$uv$ in~$T$.
Hence, $u'x' \subseteq uv \subset vw$.
It follows that $vx', wx'\in G$ and that $x',v,w$ and $x,y,z$ form a bowtie with strictly smaller length than~$\calB$.
By symmetry, the same applies to case~\ref{case:uNotDescB}:
$u,v,w$ and $u',y,z$ form a bowtie with strictly smaller length.
That contradicts with the minimality of~$\calB$.
Therefore, $T'$ cannot be canonical.
\qed
\end{proof}

\subsection{Hypergraphs with Canonical Join Trees for some Roots}

We can summarize the proof that, in Theorem~\ref{theo:NoCanonicalIffBowtie}, \ref{item:noCanTree} implies~\ref{item:allTreesHaveBowtie} as follows:
A join tree~$T$ for a given hypergraph~$H$ is either canonical, or it has a lowest 2-layer back edge~$u,v,w$.
In case of the latter, if we root $T$ in~$v$ and make each node adjacent to its highest possible ancestor, then the resulting tree is either canonical (\ie, no back edges), or there is no canonical join tree for~$H$.
Based on this, Algorithm~\ref{algo:someCanonicalTree} determines if a given acyclic hypergraph admits a canonical join tree and, in that case, constructs such a tree.

\begin{algorithm}
\caption{%
    Computes a canonical join tree if such a tree exists.
}
\label{algo:someCanonicalTree}

\KwIn{%
    An acyclic hypergraph~$H$.
}

\KwOut{%
    A canonical join tree for~$H$ if such tree exists, or \textsc{Nil} otherwise.
}

\smallskip

Compute join tree~$T$ rooted in arbitrary node~$r$ such that each node is adjacent to its highest possible ancestor (see~\cite{Leitert2021,TarjanYannak1984} or~\cite{LuoVavVanWan2026}).
\label{line:createT}

Find a tripple $u, v, w$ which forms a 2-layer back edge and with maximum~$\dep_T(u)$ (see Algorithm~\ref{algo:All2LayerBackEdge}).
If there is no such tripple, \Return $T$.
\label{line:searchBackEdgeUvw}

\smallskip

Root $T$ in~$v$ and, afterwards, make each node adjacent to its highest possible ancestor.
Let $T_v'$ be the resulting join tree.
\label{line:rootTinV}

Find a tripple $x, y, z$ which forms a 2-layer back edge (see Algorithm~\ref{algo:All2LayerBackEdge}).
If there is no such tripple, \Return $T_v'$.
\label{line:searchBackEdgeXyz}

\smallskip

\Return \textsc{Nil}.
\label{line:returnNil}
\end{algorithm}

\begin{theorem}
Algorithm~\ref{algo:someCanonicalTree} returns a canonical join tree for a given acyclic hypergraph~\( H \) in \( \calO(N) \) time if and only if \( H \) admits such a tree.
\end{theorem}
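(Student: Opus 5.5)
The proof has three parts: soundness of any returned tree, completeness when a canonical tree exists, and the linear running time.

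\emph{Soundness.} Whenever Algorithm~\ref{algo:someCanonicalTree} returns a tree in line~\ref{line:searchBackEdgeUvw} or line~\ref{line:searchBackEdgeXyz}, that tree is a join tree of~$H$ with no 2-layer back edge. By Lemma~\ref{lem:All2LayerBackEdge}, Algorithm~\ref{algo:All2LayerBackEdge} reports exactly the 2-layer back edges. Suppose the tree induced some back edge in~$G$. Corollary~\ref{cor:2LayerBackEdge}, applied to that back edge, yields a path $u \edge v \edge w$ inside the tree path between its ends with $uw \in G$. Since the back edge joins an ancestor to a descendant, this tree path is monotone. Hence $u$ is an ancestor of~$w$, so $uw$ is itself a 2-layer back edge, which is a contradiction. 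Theorem~\ref{theo:canonIffBackEdge} then shows the returned tree is canonical. It follows that the algorithm returns a canonical join tree only if one exists.

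\emph{Completeness.} Assume $H$ admits a canonical join tree; we must show the algorithm does not reach line~\ref{line:returnNil}. If $T$ from line~\ref{line:createT} has no 2-layer back edge, we are done. Otherwise line~\ref{line:searchBackEdgeUvw} picks a 2-layer back edge $u,v,w$ of maximum $\dep_T(u)$, i.e.\ a lowest one. This is exactly the situation in the proof of Theorem~\ref{theo:NoCanonicalIffBowtie}, direction \ref{item:noCanTree}$\rightarrow$\ref{item:allTreesHaveBowtie}. That argument shows: if $T_v'$ contains a 2-layer back edge $x,y,z$, then $T$ contains a bowtie. Then Theorem~\ref{theo:NoCanonicalIffBowtie} (\ref{item:someTreeHasbowtie}$\rightarrow$\ref{item:noCanTree}) says $H$ has no canonical join tree, contradicting the assumption. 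So $T_v'$ has no 2-layer back edge and is returned in line~\ref{line:searchBackEdgeXyz}.

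This reuse is the main obstacle. The earlier proof started from an arbitrary non-canonical $T$, used only the property that $u,v,w$ is a lowest 2-layer back edge, and derived that $T_v'$ is non-canonical from hypothesis~\ref{item:noCanTree}. Here we instead have the existence of $x,y,z$ as a hypothesis. I must check the two ingredients it relied on. First, "lowest" must be taken with respect to $\dep_T(u)$, which is what line~\ref{line:searchBackEdgeUvw} selects; I would argue that ties are harmless because the subtree argument only uses that no 2-layer back edge lies strictly below $u$. Second, the passage to~$T_v'$ must be the same rerooting followed by highest-possible-ancestor adjustment. Both hold, so the chain of deductions that builds $a,b,c$ and the bowtie $w \edge v \edge u \path a \edge b \edge c$ goes through verbatim.

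\emph{Running time.} Line~\ref{line:createT} and the adjustment in line~\ref{line:rootTinV} take $\calO(N)$ by~\cite{Leitert2021,LuoVavVanWan2026}, and rerooting a tree is linear. Lines~\ref{line:searchBackEdgeUvw} and~\ref{line:searchBackEdgeXyz} each run Algorithm~\ref{algo:All2LayerBackEdge} once, in $\calO(N)$ by Lemma~\ref{lem:All2LayerBackEdge}. Selecting the triple of maximum depth is a linear scan, since depths are available from the top-down traversal. So the whole algorithm runs in $\calO(N)$.
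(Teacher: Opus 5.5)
Your proof is correct and follows the paper's structure. A returned tree has no 2-layer back edge, so by Corollary~\ref{cor:2LayerBackEdge} (applied along the monotone tree path) it has no back edge at all, and it is canonical by Theorem~\ref{theo:canonIffBackEdge}. If line~\ref{line:searchBackEdgeXyz} finds $x,y,z$, you exhibit a bowtie and contradict Theorem~\ref{theo:NoCanonicalIffBowtie}.

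One item is missing from your checklist of what the reused argument for $\text{\ref{item:noCanTree}} \rightarrow \text{\ref{item:allTreesHaveBowtie}}$ needs. A bowtie requires $uv \subset vw$, and that argument never proves it: for an arbitrary $T$, Corollary~\ref{cor:2LayerBackEdge} also permits $uv \supseteq vw = uw$. In the algorithm it does hold, because $T$ from line~\ref{line:createT} attaches $w$ to its highest possible ancestor~$v$. If $vw \subseteq uv$, then $u \supseteq vw$ would be a higher choice. So $vw \not\subseteq uv$, and the corollary forces $uw = uv \subset vw$. The paper states this explicitly, and you should add it.

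Beyond that, your route differs from the paper's only in where the bowtie is assembled. The paper reads it off directly in $T_v'$ as $w \edge v \edge u$ followed by a descent to $x \edge y \edge z$. You instead pull $x,y,z$ back to $T_v$ and extract $a,b,c$ as in the proof of Theorem~\ref{theo:NoCanonicalIffBowtie}. Your version is the safer one. Suppose $x \equiv v$ and $y$ is a node from $u$'s branch that the highest-possible-ancestor step re-attached directly to~$v$. Then $w, v, u, \dots, x, y, z$ is not a path in $T_v'$. By contrast, the triple $a,b,c$ found on $T_v[x,z]$ still starts at $u$ or below it, or is $v,u,c$, which gives a length-$0$ bowtie. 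So $w \edge v \edge u$ followed by $a \edge b \edge c$ is a genuine bowtie in~$T$.
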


\begin{proof}
We first consider the case that $H$ does not admit any canonical join tree.
Theorem~\ref{theo:canonIffBackEdge} and Corollary~\ref{cor:2LayerBackEdge} then imply that each join tree induces a 2-layer back edge.
Line~\ref{line:searchBackEdgeUvw} and line~\ref{line:searchBackEdgeXyz}, therefore, both find such an edge.
Subsequently, the algorithm reaches line~\ref{line:returnNil} and returns~\textsc{Nil}.

Next, we consider the case that $T$ (created in line~\ref{line:createT}) is canonical.
Theorem~\ref{theo:canonIffBackEdge} then implies that $T$ does not induce any back edges.
Therefore, line~\ref{line:searchBackEdgeUvw} does not find such an edge and returns~$T$.

Lastly, we consider the case that $H$ admits some canonical join tree, but $T$ (created in line~\ref{line:createT}) is not canonical.
In that case, the algorithm finds a 2-layer back edge~$u,v,w$ in line~\ref{line:searchBackEdgeUvw}, creates a new join tree~$T_v'$ in line~\ref{line:rootTinV}, and checks if $T_v'$ induces a 2-layer back edge~$x,y,z$ in line~\ref{line:searchBackEdgeXyz}.
Assume that line~\ref{line:searchBackEdgeXyz} finds~$x,y,z$.
Recall that nodes in~$T$ and in~$T_v'$ are adjacent to their highest possible parent.
Hence, $uv \subset vw$ and $xy \subset yz$.
Otherwise, $w$ and~$z$ would be adjacent to $u$ and~$x$, respectively.
Additionally, since $u,v,w$ form the lowest 2-layer back edge in~$T$, $x,y,z$ must be in the subtree (of~$T_v'$) induced by $v$, $u$ and $u$'s descendants.
Therefore, $w \edge v \edge u \path x \edge y \edge z$ forms a bowtie in~$T_v'$.
However, Theorem~\ref{theo:NoCanonicalIffBowtie} then implies that no join tree of $H$ is canonical, contradicting with our assumption that such a tree exists.
Therefore, if $H$ admits a canonical join tree, line~\ref{line:searchBackEdgeXyz} does not find a back edge and instead returns~$T_v'$.

To conclude the proof, note that computing a join tree with a given root and making all nodes adjacent to their highest possible ancestor (line~\ref{line:createT} and line~\ref{line:rootTinV}) can be done in $\calO(N)$ time~\cite{Leitert2021,TarjanYannak1984}\cite{LuoVavVanWan2026}.
Additionally, one can determine all 2-layer back edges (line~\ref{line:searchBackEdgeUvw} and line~\ref{line:searchBackEdgeXyz}) in $\calO(N)$ time (see Lemma~\ref{lem:All2LayerBackEdge}).
Therefore, Algorithm~\ref{algo:someCanonicalTree} runs in $\calO(N)$ total time.
\qed
\end{proof}

\subsection{Hypergraphs with Canonical Join Trees for all Roots}

Luo~et\,al.\,\cite{LuoVavVanWan2026} already show that Berge-acyclic hypergraphs admit a canonical join tree for each of their hyperedges as root.
They observe that the line and union join graphs of these hypergraphs are block graphs where each spanning tree is also a join tree and then utilise that property.
Interestingly, Theorem~\ref{theo:allCanonEqui} below shows that their observation almost characterises hypergraphs with canonical join trees for each hyperedge.
However, only the union join graph needs to form a block graph; the line graph does not need to satisfy this property.

\begin{theorem}\label{theo:allCanonEqui}
Let \( H \) be an acyclic hypergraph~\( H \) with a join tree~\( T \), a line graph~\( L \), and a union join graph~\( G \).
Then, the following are equivalent:
\begin{enumerate}[$(i)$]
    \item
        \label{item:allCanonical}
        \( H \) admits a canonical join tree for each of its hyperedges as root.
    \item
        \label{item:noSubsetSeparators}
        For any two edges~\( uv, xy \in T \), \( uv \not\subset xy \).
    % \item
    %     \label{item:sameWeightCuts}
    %     For all join cuts, all edges of both induced subtrees have same weight (i.e. separators contain same vertices).
    % \item
    %     \label{item:unionJoinTriangle}
    %     For any triangle in~\( G \), all its edges have the same weight (\ie, all their respective separators are equal).
    \item
        \label{item:uniformWeightBlockgraph}
        \( G \) is a block graph where all edges within a block have uniform weight.
    \item
        \label{item:linegraphWeights}
        For all \( uv \in L \), either \( |uv| < \min \bigl\{\, |xy| \bigm| xy \in T[u, v] \,\bigr\} \), or \( |uv| = \max \bigl\{\, |xy| \bigm| xy \in T[u, v] \,\bigr\} \).
    % \item
    %     \label{item:canonicalInheritance}
    %     Each acyclic \( \calE' \subseteq \calE \) admits a canonical join tree for each of its hyperedges as root.
    %     TODO: Potentially false? There is no preservation of join trees when removing hyperedges even if acyclicity is preserverd.
\end{enumerate}
\end{theorem}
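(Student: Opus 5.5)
We prove the cycle \ref{item:noSubsetSeparators}$\rightarrow$\ref{item:allCanonical}$\rightarrow$\ref{item:noSubsetSeparators}, and then show that \ref{item:noSubsetSeparators} is equivalent to each of \ref{item:uniformWeightBlockgraph} and \ref{item:linegraphWeights}. Two tools recur. The first is Corollary~\ref{cor:2LayerBackEdge} together with Theorem~\ref{theo:canonIffBackEdge}: a rooted join tree is canonical iff it has no 2-layer back edge. The second is a \emph{nested triple} extracted from a violation of \ref{item:noSubsetSeparators}. Suppose $uv, xy \in T$ with $uv \subset xy$. Every edge on the tree path joining them contains $uv \cap xy = uv$. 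Let $ab$ be the last edge on this path, walking from $uv$ towards $xy$, with $ab = uv$. The next edge $bc$ then satisfies $ab \subset bc$. As in the proof of Theorem~\ref{theo:NoCanonicalIffBowtie}, $ac$ is an edge of $G$, and $ac = ab \subset bc$.

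\ref{item:noSubsetSeparators}$\rightarrow$\ref{item:allCanonical}: Fix a root $r$. Root $T$ in $r$ and make every node adjacent to its highest possible ancestor; this only reorganises $T$, so the new tree is still a join tree. Under \ref{item:noSubsetSeparators} for this tree, no two of its tree edges are strictly nested. Suppose $u \edge v \edge w$ formed a 2-layer back edge. Corollary~\ref{cor:2LayerBackEdge} would give $uw = uv \subseteq vw$ or $uv \supseteq vw = uw$, and the absence of strict nesting forces $uv = vw$. Then $u \supseteq vw$, so $w$ was not attached to its highest possible ancestor, a contradiction. Hence the tree is canonical by Theorem~\ref{theo:canonIffBackEdge}.

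One point needs checking: \ref{item:noSubsetSeparators} must not depend on the chosen join tree. I would derive this from Claim~\ref{claim:sameEdgeSets}. Every join tree uses, per edge class, edges forming the same component merges. So nestedness between classes that both occur as tree edges is tree-independent.

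\ref{item:allCanonical}$\rightarrow$\ref{item:noSubsetSeparators}: This is the step I expect to be the main obstacle. Take the nested triple $a,b,c$ and root at $a$. Run the Kruskal view of Claim~\ref{claim:sameEdgeSets}. When the class of $bc$ has been processed, $b$ and $c$ lie in a common component $K$, while $a$ lies elsewhere, because $ab \subset bc$ and the edges processed so far contain strictly more than $ab$. Processing the class of $ab$ links $K$, via some edge $x'y'$ with $y' \in K$ and $x'y' = ab$, towards the component of $a$. In any join tree $T'$ rooted at $a$, $K$ is a subtree hanging from a parent $x'$ outside it. Both $b$ and $c$ contain $x'y'$, so, exactly as in Claim~\ref{claim:yIsOnlyRoot}, both $x'b$ and $x'c$ are edges of $G$. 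Only one of them can be a tree edge, so the other is a back edge, and $T'$ is not canonical. I must check that this linking edge really is the one connecting $K$ upward towards $a$. I would argue this by noting that $a$'s component is joined to $K$ within the $ab$ class, and the root side is $a$'s side.

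\ref{item:noSubsetSeparators}$\leftrightarrow$\ref{item:uniformWeightBlockgraph},\ref{item:linegraphWeights}: Assume \ref{item:noSubsetSeparators}. For $st \in L$, every edge of $T[s,t]$ contains $st$. If $st \in G$, Lemma~\ref{lem:unionJoinProperties} makes $st$ equal to one path edge, and non-nestedness forces all path edges to equal $st$, so $|st|$ is the maximum. Otherwise $st$ is strictly contained in every path edge, so $|st|$ is below the minimum; this gives \ref{item:linegraphWeights}. The same fact shows that $G$ is the union of cliques on the maximal subtrees of $T$ whose edges all carry one fixed separator. Two such subtrees share at most one node, and tree-likeness gives a block graph with uniform weights, which is \ref{item:uniformWeightBlockgraph}. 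Conversely, a violation of \ref{item:noSubsetSeparators} gives a nested triple $a,b,c$. Its triangle in $G$ has non-uniform weights and lies in one block, contradicting \ref{item:uniformWeightBlockgraph}. Also $ac \in L$ has $|ac| = |ab|$, which equals the minimum and is below $|bc|$, hence below the maximum, contradicting \ref{item:linegraphWeights}.
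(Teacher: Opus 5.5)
There is a genuine gap in (i)$\rightarrow$(ii), at exactly the step you flagged. The other directions are sound. Your (ii)$\rightarrow$(i) via a highest-possible-ancestor tree is a different route from the paper, which goes (ii)$\rightarrow$(iii)$\rightarrow$(i) through BFS trees of the block graph. The tree-independence you worry about there is even simpler than Claim~\ref{claim:sameEdgeSets}: re-rooting keeps all edges, and re-attaching $w$ from $p$ to $q$ keeps the separator, since $wq = wp$. Your nested-triple arguments for (iii)$\rightarrow$(ii) and (iv)$\rightarrow$(ii) are also fine.

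\textbf{Where (i)$\rightarrow$(ii) fails.} You fix $K$ as the component of $b,c$ right after the class of $bc$ has been processed. Classes processed strictly between the class of $bc$ and the class of $ab$ can enlarge that component. Then the top node of $K$ in a tree $T'$ rooted at $a$ may hang from a node $x'$ through a separator strictly larger than $ab$, which $c$ need not contain. Concretely, take
\[
a = \{1,5\}, \quad b = \{1,2,3,7\}, \quad c = \{1,2,3,4\}, \quad d = \{1,6,7\},
\]
with $T$ the star centred at $b$.
\begin{itemize}
\item Here $ab = \{1\} \subset bc = \{1,2,3\}$, and $K = \{b,c\}$.
\item The class of $bd = \{1,7\}$ merges $d$ into this component before the class of $ab$ is processed.
\item The join tree $a \edge d \edge b \edge c$, rooted at $a$, attaches $K$ to $x' = d$ with $x'y' = \{1,7\} \not\subseteq c$.
\item Moreover $dc \notin G$, so the back edge $x'c$ you claim does not exist.
\end{itemize}
The tree is still non-canonical, but only because of $ab$ and $ac$, which your argument never examines. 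Your proposed justification, that $a$'s component is joined to $K$ within the class of $ab$, is true only for the enlarged component, not for $K$ itself.

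\textbf{Two repairs.}
\begin{itemize}
\item Let $K$ be the component of $b$ immediately before the class of $ab$ is processed. The edge $x'y'$ leaving $K$ towards the root lies on $T'[a,b]$, so $x'y' \supseteq ab$. It belongs to the class of $ab$ or a later one, so $|x'y'| \le |ab|$. This forces $x'y' = ab \subseteq b \cap c$, and your argument then goes through.
\item Alternatively, drop $K$ altogether, as the paper does. Both $ab$ and $ac$ are edges of $G$. The nodes $b$ and $c$ cannot both be children of the root $a$, since that would give $bc \subseteq a \cap b = ab$. Hence one of $ab$, $ac$ is a back edge in every join tree rooted at $a$.
\end{itemize}
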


\begin{proof}
$\text{\ref{item:allCanonical}} \rightarrow \text{\ref{item:noSubsetSeparators}}$:
Let $ab \subset uv$ such that they form a path~$a \edge b \path  u \edge v$ in some join tree~$T$.
Recall that $b \path u$ includes the case where $b \equiv u$.
Since $ab \subset uv$, $au, av \in G$ (disconnect $ab$ and make $a$ adjacent to $u$ or~$v$).
Also, observe that $uv \nsubseteq a$, otherwise $ab$ would need to contain $uv$ since it is on the path from $u$ to~$a$ in~$T$.
Now consider a join tree~$T_a$ rooted in~$a$.
Because $uv \nsubseteq a$, $u$ and~$v$ cannot both be children of~$a$ in~$T_a$.
Therefore, at least one of $au$ and~$av$ forms a back edge and, hence, any such~$T_a$ is not canonical.

$\text{\ref{item:noSubsetSeparators}} \rightarrow \text{\ref{item:uniformWeightBlockgraph}}$:
Before proving the statement, we make an auxiliary observation.
Let $uv \in G$ be a non-tree edge with respect to some join tree~$T$.
By properties of join trees, $uv \subseteq xy$ for each edge~$xy \in T[u, v]$.
Thus, by property~\ref{item:noSubsetSeparators}, $uv = xy$ for all such edges~$xy$.

Let $C$ be a cycle in~$G$.
Since it is a cycle, $C$ contains at least one edge~$uv \notin T$.
We first consider the case that there are at least two such edges.
In such case, $T[u, v]$ contains an edge~$xy$ which is not part of~$C$.
Since $xy = uv$ (by the above observation), we can construct a slightly modified join tree from~$T$ by removing $xy$ and using $uv$ instead.

We repeat the previous step until only one edge of~$C$ belongs to the resulting join tree.
Again, let $uv$ be that edge.
It follows that $C$ contains exactly $uv$ and~$T[u, v]$, that all edges of~$C$ are equal to~$uv$, and that $x \supseteq uv$ for each node~$x \in C$.
Therefore, for each pair $x,y \in C$, we can construct a join tree containing the edge~$xy$ with $xy = uv$ by removing any edge on~$T[x, y]$ and replacing it with~$xy$ instead.

It follows from the above that each biconnected component of~$G$ forms a clique, making~$G$ a block graph, and that the edges in each block have uniform weight.

$\text{\ref{item:uniformWeightBlockgraph}} \rightarrow \text{\ref{item:allCanonical}}$:
To prove the statement, we effectively repeat the argument from~\cite{LuoVavVanWan2026}.
Pick an arbitrary node and construct a BFS tree~$T$ from it.
Since edge weights within a block are uniform, $T$ is a maximum spanning tree of~$G$ and, thus, a valid join tree for~$H$.
Additionally, since all blocks are cliques and $T$ is a BFS tree, all non-tree edges of~$G$ are cross-edges, making $T$ canonical.

$\text{\ref{item:uniformWeightBlockgraph}} \rightarrow \text{\ref{item:linegraphWeights}}$:
First, consider an edge $uv \in G \cap T$, \ie, $uv$ is a tree edge of~$G$ with respect to~$T$.
Clearly, $|uv| = \max \bigl\{\, |xy| \bigm| xy \in T[u, v] \,\bigr\}$ since it is the only edge on that path.
Next, assume that $uv \in G \setminus T$, \ie, $uv$ is a non-tree edge.
Since $G$ is a block graph, each edge~$xy \in T[u, v]$ is part of the same block and, by property~\ref{item:uniformWeightBlockgraph}, has the same weight as~$uv$.
Thus, $|uv| = \max \bigl\{\, |xy| \bigm| xy \in T[u, v] \,\bigr\}$.

Lastly, consider an edge~$uv \in L \setminus G$ and assume that there is a join tree $T$ for which $|uv| \geq \min \bigl\{\, |xy| \bigm| xy \in T[u, v] \,\bigr\}$.
Then, we can create a tree~$T'$ by removing the minimum-weight edge~$xy \in T[u, v]$ from~$T$ and replacing it with~$uv$.
Recall that $T$ is a maximum spanning tree of $L$ and~$G$.
Since $|uv| \geq |xy|$, either $T$ was not a maximum spanning tree (\ie, not a valid join tree), or $T$ and $T'$ are both maximum spanning trees and, hence, $uv \in G$.
Either case contradicts with our assumptions.
Therefore, for all such $uv$, $|uv| < \min \bigl\{\, |xy| \bigm| xy \in T[u, v] \,\bigr\}$.

$\text{\ref{item:uniformWeightBlockgraph}} \leftarrow \text{\ref{item:linegraphWeights}}$:
We start by considering the edges~$uv \in L$ with $|uv| < \min \bigl\{\, |xy| \bigm| xy \in T[u, v] \,\bigr\}$.
Since $|uv|$ is strictly less than the minimum, $uv$ is not an edge of~$T$.
Assume now that $L$ has a maximum spanning tree~$T'$ which contains~$uv$.
Let $T'_u$ and~$T'_v$ be the respective subtrees of~$T'$ after removing~$uv$.
Since $uv \notin T$, $uv$ and $T[u, v]$ form a cycle and, thus, there is an edge~$xy \in T[u, v]$ with $x \in T'_u$ and $y \in T'_v$.
By construction, $|xy| > |uv|$.
Therefore, by connecting $T'_u$ and~$T'_v$ via $xy$, we create a spanning tree with greater total weight than~$T'$.
In other words, there is no maximum spanning tree for~$L$ which contains $uv$.

It follows from the above that the edges of~$G$ are exactly the edges~$uv$ with $|uv| = \max \bigl\{\, |xy| \bigm| xy \in T[u, v] \,\bigr\}$ for some maximum spanning tree~$T$.
We now show that each cycle of~$G$ forms a clique by repeating the approach from~$\text{\ref{item:noSubsetSeparators}} \rightarrow \text{\ref{item:uniformWeightBlockgraph}}$:

First, observe that $uv = xy$ for all $xy \in T[u, v]$ if $uv$ is a non-tree edge, because $xy \subseteq uv$ and $|uv| \geq |xy|$.
Then, assume that $G$ contains a cycle~$C$ and use the previous observation to modify $T$ until the resulting join tree contains all but one edge of~$C$.
It follows from the same observation that, for each pair $x,y \in C$, we can construct a join tree containing the edge~$xy$ with $xy = uv$.
Therefore, $G$ is a block graph and the edges in each block have uniform weight.
\qed
\end{proof}

We conclude the paper with an observation on how hard it is to determine if a given hypergraph has a canonical join tree for each of its hyperedges as root.
The question if a given family of sets~$\calF$ contains two sets $S, S'$ such that $S \subset S'$ is known as \emph{Sperner Family} problem.
If the Strong Exponential Time Hypothesis is true, then there is no algorithm which solves the the Sperner Family problem in $\calO \bigl( N^{2-\varepsilon} \bigr)$ time for some constant~$\varepsilon > 0$~\cite{BoraCresHabi2016}.

Based on property~\ref{item:noSubsetSeparators} of Theorem~\ref{theo:allCanonEqui}, we can now build a simple reduction.
Let $\calF = \{ S_1, S_2, \dots, S_n \}$ be an arbitrary family of sets with $N = \sum_{S_i \in \calF} |S_i|$.
We create an acyclic hypergraph~$H$ as follows:
Create a hyperedge~$U$ which is the union of all sets~$S_i \in \calF$ and create a hyperedge $E_i = \{ v_i \} \cup S_i$ for each $S_i \in \calF$ where $v_i$ is a new vertex that is only in~$E_i$.
Now, let $\calE = \{ U, E_1, \dots, E_n \}$ be the hyperedges of~$H$.
We can create a join tree~$T$ for $H$ by making each $E_i$ adjacent to~$U$.
Note that $E_i \cap U = S_i$.
Hence, the edge set of~$T$ is equal to~$\calF$ and, therefore, $H$ has a canonical join tree for each of its hyperedges as root if and only if $\calF$ does not contain two sets $S_i, S_j$ with $S_i \subset S_j$.
We can therefore conclude:

\begin{corollary}
If the Strong Exponential Time Hypothesis is true, then there is no algorithm which, for a constant~$\varepsilon > 0$, decides in $\calO \bigl( N^{2 - \varepsilon} \bigr)$ time whether or not a given acyclic hypergraph has a canonical join tree for each of its hyperedges as root.
\end{corollary}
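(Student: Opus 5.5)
The corollary reduces the Sperner Family problem to the decision problem. Assume some algorithm $\calA$ decides, in $\calO\bigl(N^{2-\varepsilon}\bigr)$ time, whether a given acyclic hypergraph has a canonical join tree for every hyperedge as root. From it I will build an algorithm for Sperner Family with the same asymptotic running time. That contradicts the lower bound of~\cite{BoraCresHabi2016} under the Strong Exponential Time Hypothesis.

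\textbf{Step 1: the construction.} Given $\calF = \{S_1, \dots, S_n\}$, build $H$ as described before the corollary: $U = \bigcup_i S_i$ and $E_i = \{v_i\} \cup S_i$. The fresh vertices $v_i$ guarantee that the $E_i$ are pairwise distinct and distinct from~$U$, even if $\calF$ contains duplicate sets.

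\textbf{Step 2: size and time.} The hypergraph has total size $|U| + \sum_i (|S_i| + 1) \le 2N + n = \calO(N)$, assuming the sets are nonempty; empty sets can be handled trivially up front. It can be built in $\calO(N)$ time.

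\textbf{Step 3: acyclicity.} The star $T$ centred at $U$ is a join tree. Take a vertex $\vtx{x}$. The hyperedges containing it are $U$ together with those $E_i$ with $\vtx{x} \in S_i$, and these induce a substar, which is connected. A fresh vertex $v_i$ lies in $E_i$ alone. Also $E_i \cap U = S_i$, so the edge labels of~$T$ are exactly the sets of~$\calF$.

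\textbf{Step 4: invoke Theorem~\ref{theo:allCanonEqui}.} By the equivalence of \ref{item:allCanonical} and~\ref{item:noSubsetSeparators}, $H$ admits a canonical join tree for every root if and only if no two edges of~$T$ satisfy $uv \subset xy$. By Step~3 this holds if and only if $\calF$ has no $S_i \subset S_j$.

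\textbf{Step 5: the obstacle.} The one subtlety is matching the Sperner formulation. If the problem asks for $S_i \subseteq S_j$ with $i \neq j$, identical sets must also be detected. I would handle this by first removing duplicates in $\calO(N)$ time: sort each set, then radix-sort the family, and answer ``yes'' immediately if a duplicate is found.

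\textbf{Conclusion.} Running $\calA$ on~$H$ then solves Sperner Family in $\calO(N) + \calO\bigl(\calO(N)^{2-\varepsilon}\bigr) = \calO\bigl(N^{2-\varepsilon}\bigr)$ time. This contradicts~\cite{BoraCresHabi2016} under the Strong Exponential Time Hypothesis.
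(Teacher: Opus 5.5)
Your proposal is correct and is essentially the paper's own argument: the same hypergraph with $U = \bigcup_i S_i$ and $E_i = \{v_i\} \cup S_i$, whose star join tree centred at~$U$ has exactly the sets of~$\calF$ as edge labels, combined with the equivalence of properties~\ref{item:allCanonical} and~\ref{item:noSubsetSeparators} in Theorem~\ref{theo:allCanonEqui}. Your extra bookkeeping (the $\calO(N)$ size bound, checking that the star is a join tree, and handling empty and duplicate sets up front) makes explicit details that the paper leaves implicit.
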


\end{document}